\newcommand{\?}{\stackrel{?}{=}}
\def\cE{\mathcal{E}}
\def\cT{\mathcal{T}}
\long\def\comment#1{}
\newcommand{\caE}{\ensuremath{\mathcal E}}
\newcommand{\caT}{\ensuremath{\mathcal T}}
\newcommand{\caX}{\ensuremath{\mathcal X}}
\newcommand{\sort}[1]{\ensuremath{\mathsf{#1}}}
\newcommand{\Variables}{\caX}
\newcommand{\Symbols}{\Sigma}
\newcommand{\TermsOn}[5]{{\caT^{#4}_{#1}(#2)}_{#3}^{#5}}
\newcommand{\Terms}{\TermsOn{\Symbols}{\Variables}{}{}{}}
\newcommand{\TermsS}[1]{\TermsOn{\Symbols}{\Variables}{\sort{#1}}{}{}}
\newcommand{\GTermsOn}[2]{\caT^{#2}_{#1}}
\newcommand{\GTerms}{\GTermsOn{\Symbols}{}}
\newcommand{\GTermsS}[1]{\GTermsOn{\Symbols,\sort{#1}}{}}
\newcommand{\SubstOn}[2]{{\cal S}ubst(#1,#2)}
\newcommand{\Substs}{\SubstOn{\Symbols}{\Variables}{}{}{}}
\newcommand{\idsubst}{\textit{id}}
\newcommand{\composeSubst}{}
\newcommand{\composeRel}{;}
\newcommand{\compose}{\composeSubst}
\newcommand{\restrict}[2]{#1|_{#2}}
\newcommand{\congr}[1]{=_{#1}}
\newcommand{\csuV}[3]{\textit{CSU\/}^{#2}_{#3}({#1})}
\newcommand{\var}[1]{\mathit{Var}(#1)}
\newcommand{\funocc}[1]{\mathit{Pos}_{\Symbols}(#1)}
\newcommand{\subterm}[2]{#1|_{#2}}
\newcommand{\replace}[3]{#1[#3]_{#2}}
\newcommand{\domain}[1]{\mathit{Dom}(#1)}
\newcommand{\range}[1]{\intrvar{#1}}
\newcommand{\intrvar}[1]{\mathit{Ran}(#1)}
\newcommand{\rewrite}[1]{\rightarrow_{#1}}
\newcommand{\rewrites}[1]{\rightarrow^*_{#1}}
\newcommand{\narrow}[2]{\mathop{\stackrel{#1}{\rightsquigarrow}_{#2}}}
\newcommand{\norm}[1]{{\downarrow_ {#1}}}
\newcommand{\pr}[1]{\mbox{\tt #1}}   % program text in normal text
\newcommand{\sem}[1]{{[\![#1]\!]}_{E,B}}
\newenvironment{flemma-noname}[2][]{\vskip\topsep\noindent{\bf
Lemma #2\ifthenelse{\equal{#1}{}}{}{\ }#1.}\em\ }{\vskip\topsep}
\newcommand{\ignore}[1]{}
\newtheorem{example}{Example}
\newtheorem{definition}{Definition}
\newtheorem{lemma}[definition]{Lemma}
\newtheorem{proposition}[definition]{Proposition}
\newtheorem{corollary}[definition]{Corollary}
\title{Most General Variant Unifiers\thanks{
This work has been partially supported by the EU (FEDER) and the Spanish MCIU under grant RTI2018-094403-B-C32, by the Spanish Generalitat Valenciana under grants PROMETEO/2019/098 and APOSTD/2019/127, and by the US Air Force Office of Scientific Research under award number FA9550-17-1-0286.
}}
\author{
Santiago Escobar \qquad \qquad Julia Sapi\~{n}a
\institute{VRAIN (Valencian Research Institute for Artificial Intelligence)\\Universitat Polit\`ecnica de Val\`encia\\
Valencia, Spain}
\email{\{sescobar,jsapina\}@upv.es}
}
\begin{document}
\maketitle

\begin{abstract}
Equational unification of two terms consists of finding a substitution that, when applied to both terms, makes them equal modulo some equational properties. Equational unification is of special relevance to automated deduction, theorem proving, protocol analysis, partial evaluation, model checking, etc. Several algorithms have been developed in the literature for specific equational theories, such as associative-commutative symbols, exclusive-or, Diffie-Hellman, or Abelian Groups. Narrowing was proved to be complete for unification and several cases have been studied where narrowing provides a decidable unification algorithm. A new narrowing-based equational unification algorithm relying on the concept of the \emph{variants} of a term has been developed and it is available in the most recent version of Maude, version 2.7.1, which provides quite sophisticated unification features. A variant of a term t is a pair consisting of a substitution $\sigma$ and the canonical form of $t\sigma$. Variant-based unification is decidable when the equational theory satisfies the \emph{finite variant property}. However, it may compute many more unifiers than the necessary and, in this paper, we explore how to strengthen the variant-based unification algorithm implemented in Maude to produce a minimal set of most general variant unifiers. Our experiments suggest that this new adaptation of the variant-based unification is more efficient both in execution time and in the number of computed variant unifiers than the original algorithm available in Maude.
\end{abstract}

\section{Introduction}\label{sec:intro}

Equational unification of two terms is of special relevance to many areas in computer science and consists of finding a substitution that, when applied to both terms, makes them equal modulo some equational properties. Several algorithms have been developed in the literature for specific equational theories, such as associative-commutative symbols, exclusive-or, Diffie-Hellman, or Abelian Groups (see~\cite{BS-handbook00}). Narrowing was proved to be complete for unification \cite{Hullot80,JKK83} and several cases have been studied where narrowing provides a decidable unification algorithm~\cite{AEI09,AEI11}. A new narrowing-based equational unification algorithm relying on the concept of the \emph{variants} of a term \cite{CD05} has been developed \cite{ESM12} and it is available in the most recent version of Maude, version 2.7.1, which provides quite sophisticated unification features \cite{maude-manual,Meseguer18-wollic}.

Several tools and techniques rely on Maude's advanced unification capabilities, such as termination~\cite{DLM09} and local confluence and coherence~\cite{DM12} proofs, narrowing-based theorem proving~\cite{Rusu10} or testing~\cite{Riesco14}, and \emph{logical model checking} \cite{EM07,BEM13}. The area of cryptographic protocol analysis has also benefited: the Maude-NPA tool~\cite{EMM09} is the most successful example of using variant-based equational unification in Maude and the Tamarin tool~\cite{MSCB13,DDKS17,DHRS18} also relies on variants. Numerous decision procedures for formula satisfiability modulo equational theories also rely on unification, either based on narrowing \cite{TGRK15} or by using variant generation in finite variant theories~\cite{Meseguer18-scp}.

However, variant-based unification may compute many more unifiers than the necessary. In this paper, we explore how to improve the variant-based unification algorithm implemented in Maude to produce a smaller, yet complete, set of most general variant unifiers. After some preliminaries in Section~\ref{sec:preliminaries}, we recall variant-based unification in Section~\ref{sec:eq_unification} and propose how to compute a set of most general variant unifiers in Section~\ref{sec:mgvu}. In Section~\ref{sec:mgvu-fast}, we propose a new fast algorithm that considerably reduces the number of variant unifiers by computing a complete (yet not always minimal) set of most general unifers modulo the considered theory. Our experiments in Section~\ref{sec:exp} demonstrate that this new adaptation of the variant-based unification is more efficient both in execution time and in the number of computed variant unifiers than the original algorithm. We conclude in Section~\ref{sec:conc}.
\section{Preliminaries}\label{sec:preliminaries}

We follow the classical notation and terminology from \cite{Terese03} for term rewriting, from \cite{BS-handbook00} for unification, and from \cite{Meseguer92} for rewriting logic and order-sorted notions.

We assume an order-sorted signature $\mathsf{\Sigma} = (S, \leq, \Sigma)$ with a poset of sorts $(S, \leq)$. The poset $(\sort{S},\leq)$ of sorts for $\Symbols$ is partitioned into equivalence classes, called \emph{connected components}, by the equivalence relation $(\leq \cup \geq)^+$. We assume that each connected component $[\sort{s}]$ has a \emph{top element} under $\leq$, denoted $\top_{[\sort{s}]}$ and called the \emph{top sort} of $[\sort{s}]$. This involves no real loss of generality, since if $[\sort{s}]$ lacks a top sort, it can be easily added. We also assume an $\sort{S}$-sorted family $\Variables=\{\Variables_\sort{s}\}_{\sort{s} \in \sort{S}}$ of disjoint variable sets with each $\Variables_\sort{s}$ countably infinite. $\TermsS{s}$ is the set of terms of sort \sort{s}, and $\GTermsS{s}$ is the set of ground terms of sort \sort{s}. We write $\Terms$ and $\GTerms$ for the corresponding order-sorted term algebras. Given a term $t$, $\var{t}$ denotes the set of variables in $t$.

A \textit{substitution} $\sigma\in\Substs$ is a sorted mapping from a finite subset of $\Variables$ to $\Terms$. Substitutions are written as $\sigma=\{X_1 \mapsto t_1,\ldots,X_n \mapsto t_n\}$ where the domain of $\sigma$ is $\domain{\sigma}=\{X_1,\ldots,X_n\}$ and the set of variables introduced by terms $t_1,\ldots,t_n$ is written $\range{\sigma}$. The identity substitution is $\idsubst$. Substitutions are homomorphically extended to $\Terms$. The application of a substitution $\sigma$ to a term $t$ is denoted by $t\sigma$ or $\sigma(t)$. For simplicity, we assume that every substitution is idempotent, i.e., $\sigma$ satisfies $\domain{\sigma}\cap\range{\sigma}=\emptyset$. The restriction of $\sigma$ to a set of variables $V$ is $\subterm{\sigma}{V}$, i.e., $\forall x\in V$, $\subterm{\sigma}{V}(x)=\sigma(x)$ and $\forall x\not\in V$, $\subterm{\sigma}{V}(x)=x$. Composition of two substitutions $\sigma$ and $\sigma'$ is denoted by $\sigma\compose\sigma'$. Combination of two substitutions $\sigma$ and $\sigma'$ such that $\domain{\sigma}\cap\domain{\sigma'}=\emptyset$ is denoted by $\sigma \cup \sigma'$. We call an idempotent substitution $\sigma$ a variable \emph{renaming} if there is another idempotent substitution $\sigma^{-1}$ such that $(\sigma\sigma^{-1})|_{Dom(\sigma)} = \idsubst$.

A \textit{$\Symbols$-equation} is an unoriented pair $t = t'$, where $t,t' \in \TermsS{s}$ for some sort $\sort{s}\in\sort{S}$. An \emph{equational theory} $(\Symbols,E)$ is a pair with $\Symbols$ an order-sorted signature and $E$ a set of $\Symbols$-equations. Given $\Symbols$ and a set $E$ of $\Symbols$-equations, order-sorted equational logic induces a congruence relation $\congr{E}$ on terms $t,t' \in \Terms$ (see~\cite{Meseguer97}). Throughout this paper we assume that $\GTermsS{s}\neq\emptyset$ for every sort \sort{s}, because this affords a simpler deduction system. An equational theory $(\Symbols,E)$ is \emph{regular} if for each $t = t'$ in $E$, we have $\var{t} = \var{t'}$. An equational theory $(\Symbols,E)$ is \emph{linear} if for each $t = t'$ in $E$, each variable occurs only once in $t$ and in $t'$. An equational theory $(\Symbols,E)$ is \textit{sort-preserving} if for each $t = t'$ in $E$, each sort \sort{s}, and each substitution $\sigma$, we have $t \sigma \in \TermsS{s}$ iff $t' \sigma \in \TermsS{s}$. An equational theory $(\Symbols,E)$ is \emph{defined using top sorts} if for each equation $t = t'$ in $E$, all variables in $\var{t}$ and $\var{t'}$ have a top sort. Given two terms $t$ and $t'$, we say $t$ is more general than $t'$, denoted as $t \sqsupseteq_{E} t'$, if there is a substitution $\eta$ such that $t\eta \congr{E} t'$. Similarly, given two substitutions $\sigma$ and $\rho$, we say $\sigma$ is more general than $\rho$ for a set $W$ of variables, denoted as $\subterm{\sigma}{W} \sqsupseteq_{E} \subterm{\rho}{W}$, if there is a substitution $\eta$ such that $\subterm{(\sigma\compose\eta)}{W} \congr{E} \subterm{\rho}{W}$. The $\sqsupseteq_{E}$ relation induces an equivalence relation $\simeq_{E}$, i.e., $t \simeq_{E} t'$ iff $t \sqsupseteq_{E} t'$ and $t \sqsubseteq_{E} t'$.

An \textit{$E$-unifier} for a $\Symbols$-equation $t = t'$ is a substitution $\sigma$ such that $t\sigma \congr{E} t'\sigma$. For $\var{t}\cup\var{t'} \subseteq W$, a set of substitutions $\csuV{t = t'}{W}{E}$ is said to be a \textit{complete} set of unifiers for the equality $t = t'$ modulo $E$ away from $W$ iff: (i) each $\sigma \in \csuV{t = t'}{W}{E}$ is an $E$-unifier of $t = t'$; (ii) for any $E$-unifier $\rho$ of $t = t'$ there is a $\sigma \in \csuV{t=t'}{W}{E}$ such that $\subterm{\sigma}{W} \sqsupseteq_{E} \subterm{\rho}{W}$; and (iii) for all $\sigma \in \csuV{t=t'}{W}{E}$, $\domain{\sigma} \subseteq (\var{t}\cup\var{t'})$ and $\range{\sigma} \cap W = \emptyset$. Given a conjunction $\Gamma$ of equations, a set $U$ of $E$-unifiers of $\Gamma$ is said to be \textit{minimal} if it is complete and for all distinct elements $\sigma$ and $\sigma'$ in $U$, $\sigma \sqsupseteq_E \sigma'$ implies $\sigma \congr{E} \sigma'$. A unification algorithm is said to be \textit{finitary} and complete if it always terminates after generating a finite and complete set of unifiers. A unification algorithm is said to be \textit{minimal} and complete if it always returns a minimal and complete set of unifiers.

A \textit{rewrite rule} is an oriented pair $l \to r$, where $l \not\in \Variables$ and $l,r \in \TermsS{s}$ for some sort $\sort{s}\in\sort{S}$. An \textit{(unconditional) order-sorted rewrite theory} is a triple $(\Symbols,E,R)$ with $\Symbols$ an order-sorted signature, $E$ a set of $\Symbols$-equations, and $R$ a set of rewrite rules. The set $R$ of rules is \textit{sort-decreasing} if for each $t \rightarrow t'$ in $R$, each $\sort{s} \in \sort{S}$, and each substitution $\sigma$, $t'\sigma \in \TermsS{s}$ implies $t\sigma \in \TermsS{s}$. The rewriting relation on $\Terms$, written $t \rewrite{p,R} t'$ (or just $t \rewrite{R} t'$) holds between $t$ and $t'$ iff there exist $p \in \funocc{t}$, $l \to r\in R$ and a substitution $\sigma$, such that $\subterm{t}{p} = l\sigma$, and $t' = \replace{t}{p}{r\sigma}$. The relation $\rewrite{R/E}$ on $\Terms$ is ${\congr{E} \composeRel\rewrite{R}\composeRel\congr{E}}$. The transitive (resp. transitive and reflexive) closure of $\rewrite{R/E}$ is denoted $\rewrite{R/E}^+$ (resp. $\rewrites{R/E}$). 

Reducibility of $\rewrite{R/E}$ is undecidable in general since $E$-congruence classes can be arbitrarily large. Therefore, $R/E$-rewriting is usually implemented by $R,E$-rewriting under some conditions on $R$ and $E$ such as confluence, termination, and coherence (see~\cite{JK86,Meseguer17}). A relation $\rewrite{R,E}$ on $\Terms$ is defined as: $t \rewrite{p,R,E} t'$ (or just $t \rewrite{R,E} t'$) iff there is a non-variable position $p \in \funocc{t}$, a rule $l \to r$ in $R$, and a substitution $\sigma$ such that $\subterm{t}{p} \congr{E} l\sigma$ and $t' = \replace{t}{p}{r\sigma}$. The narrowing relation $\narrow{}{R,E}$ on $\Terms$ is defined as: $t \narrow{\sigma}{p,R,E} t'$ (or just $t \narrow{\sigma}{R,E} t'$) iff there is a non-variable position $p \in \funocc{t}$, a rule $l \to r$ in $R$, and a substitution $\sigma$ such that $\subterm{t}{p}\sigma \congr{E} l\sigma$ and $t' = (\replace{t}{p}{r})\sigma$.

We call $(\Symbols,B,E)$ a \emph{decomposition} of an order-sorted equational theory ${(\Symbols,E\uplus B)}$ if $B$ is regular, linear, sort-preserving, defined using top sorts, and has a finitary and complete unification algorithm, which implies that $B$-matching is decidable, and equations $E$ are oriented into rules $\overrightarrow{E}$ such that they are sort-decreasing and \emph{convergent}, i.e., confluent, terminating, and strictly coherent modulo $B$  \cite{DM12,LM16,Meseguer17}. The irreducible version of a term $t$ is denoted by $t\norm{R,E}$.

Given a decomposition $(\Symbols,B,E)$ of an equational theory and a term $t$, a pair $(t',\theta)$ of a term $t'$ and a substitution $\theta$ is an $E,B$-\emph{variant} (or just a variant) of $t$ if $t\theta\norm{E,B} \congr{B} t'$ and $\theta\norm{E,B} \congr{B} \theta$ ~\cite{CD05,ESM12}. A \emph{complete set of $E,B$-variants}~\cite{ESM12} (up to renaming) of a term $t$ is a subset, denoted by $\sem{t}$, of the set of all $E,B$-variants of $t$ such that, for each $E,B$-variant $(t',\sigma)$ of $t$, there is an $E,B$-variant $(t'', \theta) \in \sem{t}$ such that $(t'',\theta) \sqsupseteq_{E,B} (t',\sigma)$, i.e., there is a substitution $\rho$ such that $t' \congr{E} t''\rho$ and $\restrict{\sigma}{\var{t}} =_{E} \restrict{(\theta\rho)}{\var{t}}$. A decomposition $(\Symbols,B,E)$ has the \emph{finite variant property} (FVP)~\cite{ESM12} (also called a \emph{finite variant decomposition}) iff for each $\Symbols$-term $t$, there exists a complete and finite set $\sem{t}$ of variants of $t$. Note that whether a decomposition has the finite variant property is undecidable~\cite{BGLN13} but a technique based on the dependency pair framework has been developed in \cite{ESM12} and a semi-decision procedure that works well in practice is available in~\cite{CME14tr}.

\section{Variant-based Equational Unification in Maude 2.7.1}\label{sec:eq_unification}

Rewriting logic \cite{Meseguer92} is  a flexible semantic framework within which  different concurrent systems can be naturally specified 
(see \cite{Meseguer12}). Rewriting Logic is efficiently implemented in the high-performance system Maude~\cite{maude-manual}, which has itself a formal environment of verification tools thanks to its reflective capabilities (see \cite{Maude07,Meseguer12}).

Since 2007, several symbolic capabilities have been successively added to Maude (see~\cite{DEEM+18,Meseguer18-wollic} and references therein). First, Maude has been endowed with unification, i.e., \emph{order-sorted equational unification}. Second, Maude has been extended with symbolic reachability features that rely on Maude's unification, i.e., \emph{narrowing-based reachability analysis} as well as the more general \emph{symbolic LTL model checking of infinite-state systems}~\cite{EM07,BEM13}. However, Maude's unification features are quite general in nature: (i) they are applicable to order-sorted signatures; (ii) they work modulo any combination of the equational axioms of associativity (A), commutativity (C), and identity (U); and (iii) they work modulo a set of equations that are assumed convergent modulo axioms. The third part is supported via the concept of the \emph{variants} of a term \cite{CD05} and the \emph{folding variant narrowing strategy} \cite{ESM12}, which achieves termination when the equational theory has the \emph{finite variant property} \cite{CD05,ESM12}. All these unification capabilities are seamlessly provided by a variant-based unification command in Maude, as shown below.

Equational unification can be simply understood as variant computation in an extended equational theory.

\begin{definition}\label{def:extended}{\rm\cite{ESM12}}
Given a decomposition $(\Symbols,B,E)$ with a poset of sorts $(\sort{S}, \leq)$ of an equational theory $(\Symbols,\caE)$, we extend $(\Symbols,B,E)$ and $(\sort{S}, \leq)$ to $(\widehat{\Symbols},B,\widehat{E})$ and $(\sort{\widehat{S}}, \leq)$ as follows:
\begin{enumerate}
\item we add a new sort \sort{Truth} to $\sort{\widehat{S}}$, not related to any sort in $\Sigma$,
\item we add a constant operator $\pr{tt}$ of sort $\sort{Truth}$ to $\widehat{\Symbols}$,
\item for each top sort of a connected component \sort{[s]}, we add an operator $\pr{eq}$ : \sort{[s]} $\times$ \sort{[s]} $\rightarrow$ \sort{Truth} to $\widehat{\Symbols}$, and
\item for each top sort $\sort{[s]}$, we add a variable $X{:}\sort{[s]}$ and an extra rule $\pr{eq}(X{:}\sort{[s]},X{:}\sort{[s]}) \rightarrow \pr{tt}$ to $\widehat{E}$.
\end{enumerate}
\end{definition}
Then, given any two $\Symbols$-terms $t,t'$, if $\theta$ is an equational unifier of $t$ and $t'$, then the $E{,}B$-canonical forms of $t\theta$ and $t'\theta$ must be $B$-equal and therefore the pair $(\pr{tt},\theta)$ must be a variant of the term $\pr{eq}(t,t')$.  Furthermore, if the term $\pr{eq}(t,t')$ has a finite set of most general variants, then we are \emph{guaranteed} that the set of most general $\mathcal{E}$-unifiers of $t$ and $t'$ is \emph{finite}.

Let us make explicit the relation between variants and equational unification. First, we define the intersection of two sets of variants. Without loss of generality, we assume in this paper that each variant pair $(t',\sigma)$ of a term $t$ uses new freshly generated variables.

\begin{definition}[Variant Intersection]{\rm\cite{ESM12}}
Given a decomposition $(\Symbols,B,E)$ of an equational theory, two $\Symbols$-terms $t_1$ and $t_2$ such that $W_\cap = \var{t_1}\cap\var{t_2}$ and $W_\cup = \var{t_1}\cup\var{t_2}$, and two sets $V_1$ and $V_2$ of variants of $t_1$ and $t_2$, respectively, we define 
$V_1 \cap V_2 = 
\{(u_1\sigma,\theta_1\sigma \cup \theta_2\sigma \cup \sigma) \mid (u_1,\theta_1) \in V_1 \wedge (u_2,\theta_2) \in V_2 \wedge 
\exists \sigma: \sigma \in \csuV{u_1 = u_2}{W_\cup}{B} 
\wedge 
\restrict{(\theta_1\sigma)}{W_\cap} 
\congr{B} 
\restrict{(\theta_2\sigma)}{W_\cap}
\}
$.
\end{definition}

Then, we define variant-based unification as the computation of the variants of the two terms in a unification problem and their intersection.

\begin{proposition}[Variant-based Unification]{\rm\cite{ESM12}}
Let $(\Symbols,B,E)$ be a decomposition of an equational theory. Let $t_{1},t_{2}$ be two $\Symbols$-terms. Then, $\rho$ is an unifier of $t_{1}$ and $t_{2}$ iff $\exists (t',\rho)\in\sem{t_{1}}\cap\sem{t_{2}}$.
\end{proposition}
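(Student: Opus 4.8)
The plan is to prove the two directions of the equivalence separately. Throughout write $\mathcal{E} = E\uplus B$; since the decomposition is convergent modulo $B$, ``$\rho$ is a unifier of $t_1$ and $t_2$'' means $t_1\rho \congr{\mathcal{E}} t_2\rho$, equivalently $(t_1\rho)\norm{E,B} \congr{B} (t_2\rho)\norm{E,B}$. I will use freely that $\congr{\mathcal{E}}$ and $\congr{B}$ are closed under substitution, that variant pairs of distinct terms are chosen with disjoint fresh variables, and that for a variant $(u,\theta)$ of $t$ one may take $\domain{\theta}\subseteq\var{t}$. The right-to-left implication is essentially a computation with $\congr{\mathcal{E}}$; the left-to-right one rests on completeness of the variant sets $\sem{t_1}$ and $\sem{t_2}$ and is where the real difficulty lies.

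\emph{Soundness} ($\Leftarrow$). Suppose $(t',\rho)\in\sem{t_1}\cap\sem{t_2}$. Unfolding the definition of variant intersection, there are $(u_1,\theta_1)\in\sem{t_1}$, $(u_2,\theta_2)\in\sem{t_2}$ and $\sigma\in\csuV{u_1 = u_2}{W_\cup}{B}$ with $\restrict{(\theta_1\sigma)}{W_\cap}\congr{B}\restrict{(\theta_2\sigma)}{W_\cap}$, $t' = u_1\sigma$, and $\rho = \theta_1\sigma\cup\theta_2\sigma\cup\sigma$. From the variant property, $t_i\theta_i\congr{\mathcal{E}}(t_i\theta_i)\norm{E,B}\congr{B}u_i$, hence $t_i\theta_i\sigma\congr{\mathcal{E}}u_i\sigma$. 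Because $\domain{\sigma}$ consists of fresh variables disjoint from $W_\cup$ and $\domain{\theta_i}\subseteq\var{t_i}$, one checks $\restrict{\rho}{\var{t_i}}\congr{B}\restrict{(\theta_i\sigma)}{\var{t_i}}$: on $\var{t_i}\setminus W_\cap$ the two substitutions agree literally, and on $W_\cap$ the displayed compatibility condition supplies the $B$-equality. Hence $t_i\rho\congr{B}t_i(\theta_i\sigma) = (t_i\theta_i)\sigma\congr{\mathcal{E}}u_i\sigma$ for $i\in\{1,2\}$, and since $\sigma$ is a $B$-unifier of $u_1$ and $u_2$, $t_1\rho\congr{\mathcal{E}}u_1\sigma\congr{B}u_2\sigma\congr{\mathcal{E}}t_2\rho$; so $\rho$ is an $\mathcal{E}$-unifier, and $t' = u_1\sigma\congr{B}u_2\sigma$ is their common canonical value up to $B$.

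\emph{Completeness} ($\Rightarrow$). Let $\rho$ be a unifier. Restricting $\rho$ to $W_\cup = \var{t_1}\cup\var{t_2}$ and then replacing it by $\rho\norm{E,B}$ keeps it a unifier, so assume $\rho$ is $E,B$-normalized with $\domain{\rho}\subseteq W_\cup$. Put $v_i = (t_i\rho)\norm{E,B}$; then $v_1\congr{B}v_2$, and a direct check shows $(v_i,\restrict{\rho}{\var{t_i}})$ is an $E,B$-variant of $t_i$. By completeness of $\sem{t_i}$ there are $(u_i,\theta_i)\in\sem{t_i}$ and $\xi_i$ with $v_i\congr{\mathcal{E}}u_i\xi_i$ and $\restrict{\rho}{\var{t_i}}\congr{\mathcal{E}}\restrict{(\theta_i\xi_i)}{\var{t_i}}$; merging $\xi_1,\xi_2$ (they act on disjoint fresh variables) into $\xi$ yields an $\mathcal{E}$-unifier of $u_1$ and $u_2$ with $\restrict{(\theta_1\xi\cup\theta_2\xi)}{W_\cup}\sqsupseteq_{\mathcal{E}}\restrict{\rho}{W_\cup}$. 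It remains to extract from this data an \emph{actual} member of $\sem{t_1}\cap\sem{t_2}$ whose substitution is at least as general as $\rho$ on $W_\cup$ --- which is the precise reading of ``$\exists(t',\rho)\in\sem{t_1}\cap\sem{t_2}$'' once, as the paper implicitly does, one interprets $\sem{\cdot}$ and membership up to the variant preorder $\sqsupseteq_{E\uplus B}$ and renaming.

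This last step is the main obstacle: the substitution $\xi$ above is only an $\mathcal{E}$-unifier of the canonical forms $u_1,u_2$, whereas the definition of variant intersection demands a genuine $B$-unifier $\sigma$ of them together with $B$-compatibility on $W_\cap$. Applying $\xi$ may create new $E$-redexes in $u_1$ or $u_2$, but each such redex corresponds to an additional variant-narrowing step from $t_1$ or from $t_2$, and the further-narrowed variants are again subsumed by members of the complete sets $\sem{t_1},\sem{t_2}$. I would therefore isolate the following decomposition lemma: for all $\Symbols$-terms $t_1,t_2$, the $E,B$-variants of $\pr{eq}(t_1,t_2)$ whose term component is $\pr{tt}$ coincide, up to $\sqsupseteq_{E\uplus B}$ and renaming, with $\sem{t_1}\cap\sem{t_2}$ --- the point being that the only narrowing step available at the root of $\pr{eq}(t_1,t_2)$ is the extra rule $\pr{eq}(X,X)\to\pr{tt}$, which fires exactly when the two argument variants are $B$-unifiable. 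Combining this lemma with the observation recorded before the statement (that $\rho$ is an $\mathcal{E}$-unifier of $t_1,t_2$ iff $(\pr{tt},\rho\norm{E,B})$ is an $E,B$-variant of $\pr{eq}(t_1,t_2)$) and with completeness of folding variant narrowing for $\pr{eq}(t_1,t_2)$ closes the gap. The remaining bookkeeping --- domains, idempotence, restrictions to $W_\cap$ and $W_\cup$, and fresh-variable renamings --- is routine and I would defer it to the end.
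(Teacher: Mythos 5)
Your soundness ($\Leftarrow$) direction is essentially fine, and you have correctly located the crux of the completeness ($\Rightarrow$) direction; but that crux is never actually discharged, so the proof has a genuine gap. Concretely: from completeness of $\sem{t_1}$ and $\sem{t_2}$ you obtain $(u_i,\theta_i)$ and instances $\xi_i$ with $u_1\xi_1$ and $u_2\xi_2$ equal only modulo $E\cup B$, whereas membership in $\sem{t_1}\cap\sem{t_2}$ requires a genuine \emph{$B$-unifier} $\sigma\in\csuV{u_1 = u_2}{W_\cup}{B}$ together with $\restrict{(\theta_1\sigma)}{W_\cap}\congr{B}\restrict{(\theta_2\sigma)}{W_\cap}$. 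You propose to bridge this with a ``decomposition lemma'' identifying the $\pr{tt}$-variants of $\pr{eq}(t_1,t_2)$ with $\sem{t_1}\cap\sem{t_2}$, but that lemma \emph{is} the hard half of the proposition in disguise, and you state it without proof. Its proof must establish at least two nontrivial points: (i) that every folding variant narrowing derivation from $\pr{eq}(t_1,t_2)$ to $\pr{tt}$ factors into derivations from $t_1$ and from $t_2$ followed by a single root step with $\pr{eq}(X,X)\to\pr{tt}$ --- not immediate, because a step inside $t_1$ may instantiate the shared variables $W_\cap$ and thereby rewrite inside $t_2$ as well; and (ii) that the compatibility condition on $W_\cap$ holds for the \emph{most general} $\sigma$ returned by the $B$-unification algorithm, not merely for its instance $\xi$ --- from $\xi\congr{B}\sigma\tau$ you only get $\restrict{(\theta_1\sigma\tau)}{W_\cap}\congr{B}\restrict{(\theta_2\sigma\tau)}{W_\cap}$, which does not by itself yield the condition with $\tau$ stripped off. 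Neither point is addressed, so the $\Rightarrow$ direction remains a plan rather than a proof.

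Two mitigating remarks. First, this paper gives no proof at all --- the proposition is imported verbatim from \cite{ESM12} --- and the route you sketch (the $\pr{eq}$ encoding of Definition~\ref{def:extended} plus completeness of folding variant narrowing) is precisely how \cite{ESM12} proves it, so your diagnosis of where the difficulty lives is accurate. Second, your obstacle (i) largely dissolves if you use the $B$-equality form of variant subsumption from \cite{ESM12} ($t'\congr{B}t''\rho$, rather than the $\congr{E}$ written in this paper's preliminaries): with normalized $\xi_i$ one then gets $u_1\xi_1\congr{B}v_1\congr{B}v_2\congr{B}u_2\xi_2$ directly, so $\xi_1\cup\xi_2$ already $B$-unifies the canonical forms and no further narrowing of $u_1,u_2$ is needed; but point (ii), and the reading of the statement up to $\sqsupseteq_{E\cup B}$ and renaming that you rightly flag, still have to be argued explicitly.
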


The most recent version 2.7.1 of Maude \cite{maude-manual} incorporates variant-based unification based on the folding variant narrowing strategy \cite{ESM12}. First, there exists a variant generation command of the form: 

\noindent

{\small
\begin{verbatim}
  get variants [ n ] in ModId : Term .
\end{verbatim}
}

\noindent
where $n$ is an optional argument providing a bound on the number of variants requested, so that if the cardinality of the set of variants is greater than the specified bound, the variants beyond that bound are omitted; and \texttt{ModId} is the identifier of the module where the command takes place. Second, there exists a variant-based unification command of the form: 

\noindent

{\small
\begin{verbatim}
  variant unify [ n ] in ModId : T1 =? T1' /\ ... /\  Tk =? Tk' .
\end{verbatim}
}

\noindent
where $k\geq 1$ and $n$ is an optional argument providing a bound on the number of unifiers requested, so that if there are more unifiers, those beyond that bound are omitted; and \texttt{ModId} is the identifier of the module where the command takes place.

\begin{example}\label{ex:xor}
Consider the following equational theory for exclusive-or that assumes three extra constants \verb+a+, \verb+b+, and \verb+c+. Note that the theory is not coherent modulo $AC$ without the second equation.

{\small
\begin{verbatim}
  fmod EXCLUSIVE-OR is 
    sorts Elem ElemXor .  
    subsort Elem < ElemXor .
    ops a b c : -> Elem .
    op mt : -> ElemXor .
    op _*_ : ElemXor ElemXor -> ElemXor [assoc comm] .
    vars X Y Z U V : [ElemXor] .
    eq [idem] :     X * X = mt    [variant] .
    eq [idem-Coh] : X * X * Z = Z [variant] .
    eq [id] :       X * mt = X    [variant] .
  endfm
\end{verbatim}
}

\noindent
The attribute \verb+variant+ specifies that these equations will be used for variant-based unification. Since this theory has the finite variant property (see \cite{CD05,ESM12}), given the term \verb!X * Y! it is easy to verify that there are seven most general variants.

{\small
\begin{verbatim}
  Maude> get variants in EXCLUSIVE-OR : X * Y .

  Variant #1                                      ...        Variant #7
  [ElemXor]: #1:[ElemXor] * #2:[ElemXor]          ...        [ElemXor]: %1:[ElemXor]
  X --> #1:[ElemXor]                              ...        X --> %1:[ElemXor]
  Y --> #2:[ElemXor]                              ...        Y --> mt
\end{verbatim}
}

\noindent  Note that there are two forms of fresh variables, {\small\textit{\texttt{\#n:Sort}}} and {\small\textit{\texttt{\%n:Sort}}}, depending on whether they are generated by unification modulo axioms or by narrowing with the equations modulo axioms. Also note that the two forms have different counters. 

When we consider a variant unification problem between terms $X * Y$ and $U * V$, there are $57$ unifiers:

{\small
\begin{verbatim}
  Maude> variant unify in EXCLUSIVE-OR : X * Y =? U * V  .
  Unifier #1
  X --> %1:[ElemXor] * %3:[ElemXor]
  Y --> %2:[ElemXor] * %4:[ElemXor]
  V --> %1:[ElemXor] * %2:[ElemXor]
  U --> %3:[ElemXor] * %4:[ElemXor]

  Unifier #2
  X --> %1:[ElemXor] * %3:[ElemXor]
  Y --> %2:[ElemXor]
  V --> %1:[ElemXor] * %2:[ElemXor]
  U --> %3:[ElemXor]
  ...
\end{verbatim}
}

\end{example}

Note that this method does not provide an equational unification algorithm in general: given an equational theory $(\Sigma,\cE)$ and two terms $t,t'$ that have a finite, minimal, and complete set of equational unifiers modulo $\caE$, the equational theory $\caE$ may not have a finite variant decomposition. An example is the unification under homomorphism (or one-side distributivity), where there is a finite number of unifiers of two terms but the theory does not satisfy the finite variant property (see \cite{CD05,ESM12}). 

The following result from \cite{ESM12} ensures a complete set of unifiers for a finite variant decomposition.

\begin{corollary}[Finitary $\caE$-unification]{\rm\cite{ESM12}}
\label{cor:finitary-unification}
Let $(\Symbols,B,E)$ be a finite variant decomposition of an equational theory. Given two terms $t,t'$, the set 
$\csuV{t = t'}{\cap}{E\cup B}=\{\theta \mid (w,\theta)\in\sem{t}\cap\sem{t'}\}$ is a \emph{finite and complete} set of unifiers for $t = t'$.
\end{corollary}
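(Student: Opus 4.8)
There are two things to establish. First, that the set is \emph{finite}; second, that it is a \emph{complete} set of unifiers of $t = t'$ --- that is, every substitution it contains is an $(E\cup B)$-unifier of $t = t'$, and every $(E\cup B)$-unifier of $t = t'$ is subsumed modulo $E\cup B$, on $\var{t}\cup\var{t'}$, by one of them. The strategy is to obtain completeness essentially for free from the Variant-based Unification proposition stated above, and to derive finiteness from the two facts bundled into the hypothesis ``\emph{finite} variant decomposition'': that the chosen complete sets of variants are finite, and that $B$, being part of a decomposition, has a \emph{finitary} and complete unification algorithm.

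\emph{Finiteness.} Since the decomposition is finite-variant, $\sem{t}$ and $\sem{t'}$ are finite. Reading the definition of variant intersection with $t,t'$ playing the roles of $t_1,t_2$ (so $W_\cap = \var{t}\cap\var{t'}$ and $W_\cup = \var{t}\cup\var{t'}$), the set $\sem{t}\cap\sem{t'}$ is formed by ranging over the finitely many pairs $(u_1,\theta_1)\in\sem{t}$ and $(u_2,\theta_2)\in\sem{t'}$ and, for each pair, over the set $\csuV{u_1 = u_2}{W_\cup}{B}$, which is finite because $B$-unification is finitary. Thus $\sem{t}\cap\sem{t'}$ is the image of a finite set of triples $((u_1,\theta_1),(u_2,\theta_2),\sigma)$ under a map, hence finite; so is its projection onto the substitution component, which is exactly $\csuV{t = t'}{\cap}{E\cup B}$.

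\emph{Completeness.} That each $\theta$ with $(w,\theta)\in\sem{t}\cap\sem{t'}$ is an $(E\cup B)$-unifier of $t$ and $t'$ is the ``$\Leftarrow$'' direction of the Variant-based Unification proposition. For the converse, let $\rho$ be an arbitrary $(E\cup B)$-unifier of $t = t'$. After normalising, $((t\rho)\norm{E,B},\,\rho\norm{E,B})$ is an $E,B$-variant of $t$ and $((t'\rho)\norm{E,B},\,\rho\norm{E,B})$ is one of $t'$; since $\overrightarrow{E}$ is convergent and strictly coherent modulo $B$ and $t\rho\congr{E\cup B}t'\rho$, these two canonical forms are $B$-equal. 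By completeness of $\sem{t}$ and $\sem{t'}$ there are $(u_1,\theta_1)\in\sem{t}$ and $(u_2,\theta_2)\in\sem{t'}$ more general than them; unfolding ``more general'' yields instances of $u_1$ and $u_2$ that are equal modulo $E\cup B$ and agree on $W_\cap$, and hence --- after normalising and using the fresh-variable convention --- a $\sigma\in\csuV{u_1 = u_2}{W_\cup}{B}$ with $\restrict{(\theta_1\sigma)}{W_\cap}\congr{B}\restrict{(\theta_2\sigma)}{W_\cap}$ such that $\theta_1\sigma\cup\theta_2\sigma\cup\sigma$ is more general than $\rho$ on $\var{t}\cup\var{t'}$. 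The corresponding $\theta\in\csuV{t = t'}{\cap}{E\cup B}$ therefore subsumes $\rho$. Since this chain of reasoning is precisely the ``$\Rightarrow$'' direction of the Variant-based Unification proposition, in the actual write-up I would simply cite that proposition for completeness and concentrate on the finiteness argument.

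The main obstacle --- already discharged inside the Variant-based Unification proposition, which the excerpt permits me to assume --- is the bridge between the original theory and the variant world: one must check that the original equality $t\rho\congr{E\cup B}t'\rho$ genuinely forces $(t\rho)\norm{E,B}\congr{B}(t'\rho)\norm{E,B}$ (this is exactly where convergence and strict coherence modulo $B$ of the decomposition are used), and that composing the two subsumption witnesses with the $B$-unifier $\sigma$ returned by the intersection really does yield a substitution more general than $\rho$ modulo $E\cup B$ on $\var{t}\cup\var{t'}$ --- which calls for careful tracking of the fresh renamings that the ``fresh variables'' convention keeps implicit. Everything else, including finiteness and the soundness of each computed unifier, is routine.
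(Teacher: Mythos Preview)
The paper does not supply a proof of this corollary at all: it is quoted verbatim from \cite{ESM12} and treated as an imported result, so there is no ``paper's own proof'' to compare against. Your argument is nonetheless the natural one and is correct: finiteness follows immediately from the finiteness of $\sem{t}$ and $\sem{t'}$ together with the finitariness of $B$-unification built into the notion of decomposition, and completeness is exactly the content of the Variant-based Unification proposition that the paper states just before the corollary.
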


However, Corollary~\ref{cor:finitary-unification} does not provide a minimal set of \emph{most general} unifiers w.r.t. the $\sqsupseteq_{E\cup B}$ relation. 
For instance, it is well-known that unification in the exclusive-or theory is unitary, i.e., there exists only one most general unifier modulo exclusive-or
\cite{KN87}.
For the unification problem $X * Y \? U * V$ of Example~\ref{ex:xor}, the  most general unifier w.r.t. $\sqsupseteq_{E\cup B}$ is 
$\{X \mapsto Y * U * V\}$,
which should be appropriately written as 
$$\sigma=\{X \mapsto Y' * U' * V', Y \mapsto Y', U \mapsto U', V \mapsto V'\}.$$
Note that 
$\{Y \mapsto X * U * V\}$,
$\{U \mapsto Y * X * V\}$,
and
$\{V \mapsto Y * U * X\}$ are equivalent to the former unifier w.r.t. $\sqsupseteq_{E\cup B}$
by composing $\sigma$ with, respectively,
$\rho_1=\{Y' \mapsto X'' * U'' * V'',X' \mapsto X'', U' \mapsto U'', V' \mapsto V''\}$,
$\rho_2=\{U' \mapsto Y'' * X'' * V'',X' \mapsto X'', Y' \mapsto Y'', V' \mapsto V''\}$,
and
$\rho_3=\{V' \mapsto Y'' * U'' * X'',X' \mapsto X'', U' \mapsto U'', Y' \mapsto Y''\}$.
Similarly,
$\{X \mapsto U, Y \mapsto V\}$ 
and
$\{X \mapsto V, Y \mapsto U\}$ 
are equivalent to all the previous ones.

\section{Computing More General Variant Unifiers}\label{sec:mgvu}

Note that when  $(\Sigma,B,E)$ is a finite variant decomposition and $B$-unification is finitary, we get an \emph{$E\cup B$-matching algorithm} as $\textit{Match}_{E\cup B}(u,v)=\{\theta \mid \bar{\theta} \in \csuV{u = \bar{v}}{\cap}{E\cup B}\}$, where $\bar{v}$ is obtained from $v$ by turning its variables $x_1,\ldots,x_n$ into fresh constants $\bar{x}_1,\ldots,\bar{x}_n$, and $\theta$ is obtained from $\bar{\theta}$ by, given a binding $x \mapsto \bar{t}\in\bar{\theta}$, adding the binding $x\mapsto t$ to $\theta$; the term $t$ is easily obtained from $\bar{t}$ by replacing every occurrence of a fresh constant $\bar{x}_1,\ldots,\bar{x}_n$ by its original. We say $t \sqsupseteq_{E\cup B} t'$ if $\textit{Match}_{E\cup B}(t,t')\neq\emptyset$, and $t \sqsupset_{E\cup B} t'$ if $t \sqsupseteq_{E\cup B} t'$ and $t \not=_{E\cup B} t'$.

It is easy to provide, at the theoretical level, a minimal set of most general variant unifiers by post-filtering the set of computed unifiers by using $\sqsupseteq_{E\cup B}$.

\begin{proposition}[Post-filtered Variant-based Unification]\label{prop:post}
Let $(\Symbols,B,E)$ be a finite variant decomposition of an equational theory. Given two terms $t,t'$, the set $\csuV{t = t'}{\cap,\sqsupset}{E\cup B}= \{\theta \mid \theta \in \csuV{t = t'}{\cap}{E\cup B} \wedge \nexists \theta'\in \csuV{t = t'}{\cap}{E\cup B}\setminus\{\theta\} : \theta' \sqsupset_{E\cup B} \theta\}$ is a \emph{finite and complete} set of unifiers for $t = t'$. 
Even more, the quotient $\csuV{t = t'}{\cap,\sqsupset}{E\cup B}/_{\simeq_{E\cup B}}$ w.r.t.\ the equivalence relation $\simeq_{E\cup B}$ induced from $\sqsupseteq_{E\cup B}$ is a \emph{finite, minimal, and complete} set of unifiers for $t = t'$.
\end{proposition}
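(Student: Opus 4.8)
The plan is to reduce everything to Corollary~\ref{cor:finitary-unification} plus elementary order theory on the finite set $S := \csuV{t = t'}{\cap}{E\cup B}$, viewing $\sqsupseteq_{E\cup B}$ and its strict part $\sqsupset_{E\cup B}$ as a generality ordering on substitutions. The routine parts come first: since $\csuV{t = t'}{\cap,\sqsupset}{E\cup B}$ is by construction a subset of $S$, and $S$ is finite by Corollary~\ref{cor:finitary-unification}, the filtered set and its $\simeq_{E\cup B}$-quotient are finite; and every element of the filtered set is an $E\cup B$-unifier of $t=t'$ meeting the domain/range side conditions, simply because $S$ already does. Thus only completeness (of the filtered set, then of the quotient) and minimality (of the quotient) need an argument.

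For completeness I would use a ``climbing'' argument along $\sqsupset_{E\cup B}$. Given an arbitrary $E\cup B$-unifier $\rho$ of $t=t'$, Corollary~\ref{cor:finitary-unification} provides $\theta_0\in S$ with $\restrict{\theta_0}{\var{t}\cup\var{t'}}\sqsupseteq_{E\cup B}\restrict{\rho}{\var{t}\cup\var{t'}}$. If $\theta_0$ was filtered out, the defining clause hands us $\theta_1\in S\setminus\{\theta_0\}$ with $\theta_1\sqsupset_{E\cup B}\theta_0$, and because $\sqsupseteq_{E\cup B}$ is a preorder (reflexive via $\idsubst$; transitive by substitution composition, using that $\congr{E\cup B}$ is a congruence) we still get $\theta_1\sqsupseteq_{E\cup B}\rho$ on $\var{t}\cup\var{t'}$. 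Iterating yields a $\sqsupset_{E\cup B}$-increasing chain inside $S$ every member of which dominates $\rho$; when it stops we have reached some $\theta_n\in\csuV{t = t'}{\cap,\sqsupset}{E\cup B}$ with $\restrict{\theta_n}{\var{t}\cup\var{t'}}\sqsupseteq_{E\cup B}\restrict{\rho}{\var{t}\cup\var{t'}}$, which is exactly completeness of the filtered set; it carries over to the quotient because $[\theta_n]$ then covers $\rho$.

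The step I expect to be the real obstacle is proving that the chain must stop, i.e.\ that $\sqsupset_{E\cup B}$ restricted to $S$ is a well-founded strict order rather than a relation with $2$-cycles between substitutions that are renamings of each other (for instance $\{x\mapsto y\}$ and $\{x\mapsto z\}$ are mutually $\sqsupseteq_{E\cup B}$-related yet distinct). Here I would lean on the standing convention that variants, hence the substitutions gathered in $S=\csuV{t=t'}{\cap}{E\cup B}$, are taken up to renaming, so that $S$ contains no two substitutions related by a renaming; then $\sqsupset_{E\cup B}$ is irreflexive and transitive on $S$, hence a strict partial order on a finite set, hence admits no infinite ascending chain and the climbing terminates. (If one wished to avoid this convention, an equivalent fix is to read $\sqsupset_{E\cup B}$ as ``strictly more general'', i.e.\ $\sqsupseteq_{E\cup B}$ but not $\sqsubseteq_{E\cup B}$, which is a strict partial order outright.)

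It remains to get minimality of the quotient $\csuV{t = t'}{\cap,\sqsupset}{E\cup B}/_{\simeq_{E\cup B}}$, for which I would show it is a $\sqsupseteq_{E\cup B}$-antichain; then the minimality implication of the preliminaries (``$\sigma\sqsupseteq_{E\cup B}\sigma'$ implies $\sigma\congr{E\cup B}\sigma'$'') holds vacuously for distinct classes. Suppose $[\sigma]\ne[\sigma']$ with $[\sigma]\sqsupseteq_{E\cup B}[\sigma']$ and choose representatives $\sigma,\sigma'$ in the filtered set. Distinctness of the classes means $\sigma\not\simeq_{E\cup B}\sigma'$, hence also $\sigma\ne\sigma'$ as substitutions and $\sigma\sqsupset_{E\cup B}\sigma'$, so the defining clause of $\csuV{t = t'}{\cap,\sqsupset}{E\cup B}$ would have removed $\sigma'$ --- contradicting that $\sigma'$ was kept. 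Completeness of the quotient was already observed, and well-definedness of $\sqsupseteq_{E\cup B}$ on $\simeq_{E\cup B}$-classes is a routine check; together these finish the argument.
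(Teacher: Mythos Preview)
The paper does not actually supply a proof for this proposition: immediately after its statement the text moves on to the implementation (``We have implemented both post-filtering stages\ldots''). The result is treated as an easy consequence of Corollary~\ref{cor:finitary-unification}, so there is nothing to compare your argument against at the level of proof structure.

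Your plan is sound and, in fact, more careful than the paper. Reducing to Corollary~\ref{cor:finitary-unification}, observing that finiteness and the unifier/domain conditions are inherited from $S$, and then running a finite ascending-chain argument along $\sqsupset_{E\cup B}$ to recover completeness is exactly the right shape; the antichain argument for minimality of the quotient is likewise correct.

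You are also right to flag the termination obstacle. With the paper's own definition, $t \sqsupset_{E\cup B} t'$ means $t \sqsupseteq_{E\cup B} t'$ together with $t \not=_{E\cup B} t'$, and the analogous reading for substitutions does \emph{not} yield a strict partial order: two distinct but $\simeq_{E\cup B}$-equivalent substitutions in $S$ would satisfy $\sigma \sqsupset_{E\cup B} \sigma'$ and $\sigma' \sqsupset_{E\cup B} \sigma$, so the climbing chain could cycle and, worse, the filter would discard \emph{both}, potentially losing completeness. Your two proposed repairs --- either invoking the standing ``up to renaming'' convention on variants so that $S$ contains at most one representative per $\simeq_{E\cup B}$-class, or reading $\sqsupset_{E\cup B}$ as $\sqsupseteq_{E\cup B}$ but not $\sqsubseteq_{E\cup B}$ --- each close the gap cleanly. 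The paper leaves this point implicit, so your explicit treatment is an improvement rather than a deviation.
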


We have implemented both post-filtering stages  $\csuV{t = t'}{\cap,\sqsupset}{E\cup B}$ and $\csuV{t = t'}{\cap,\sqsupset}{E\cup B}/_{\simeq_{E\cup B}}$ in an extended version of Full Maude version 27g \cite{full-maude} available at \url{http://safe-tools.dsic.upv.es/mgvu}. The new command implementing the algorithm $\csuV{t = t'}{\cap,\sqsupset}{E\cup B}$ is as follows:

\noindent

{\small
\begin{verbatim}
 (post variant unify [ n ] in ModId : T1 =? T1' /\ ... /\  Tk =? Tk' .)
\end{verbatim}
}

\noindent where $k\geq 1$ and $n$ is an optional argument providing a bound on the number of unifiers requested, so that if there are more unifiers, those beyond that bound are omitted; and \texttt{ModId} is the identifier of the module where the command takes place.

When we consider the previous variant unification problem between terms $X * Y$ and $U * V$, now we get just $7$ unifiers from the $57$ unifiers above.

{\small
\begin{verbatim}
  Maude> (post variant unify in EXCLUSIVE-OR : X * Y =? U * V  .)
  Unifier #1                                              ...     Unifier #7
  X --> %1:[ElemXor] * %3:[ElemXor]                       ...     X --> %2:[ElemXor]
  Y --> %2:[ElemXor] * %4:[ElemXor]                       ...     Y --> %1:[ElemXor]
  V --> %1:[ElemXor] * %2:[ElemXor]                       ...     V --> %1:[ElemXor]
  U --> %3:[ElemXor] * %4:[ElemXor]                       ...     U --> %2:[ElemXor]
\end{verbatim}
}

\noindent The new command reporting the quotient $\csuV{t = t'}{\cap,\sqsupset}{E\cup B}/_{\simeq_{E\cup B}}$ is as follows:

\noindent

{\small
\begin{verbatim}
 (post quotient variant unify [ n ] in ModId : T1 =? T1' /\ ... /\  Tk =? Tk' .)
\end{verbatim}
}

\noindent where $k\geq 1$ and $n$ is an optional argument providing a bound on the number of unifiers requested, so that if there are more unifiers, those beyond that bound are omitted; and \texttt{ModId} is the identifier of the module where the command takes place.

When we consider the previous variant unification problem between terms $X * Y$ and $U * V$, now we get just one unifier, since all the seven unifiers reported before are equivalent modulo exclusive-or.

{\small
\begin{verbatim}
  Maude> (post quotient variant unify in EXCLUSIVE-OR : X * Y =? U * V  .)
  
  Unifier #1
  X --> %1:[ElemXor] * %3:[ElemXor]
  Y --> %2:[ElemXor] * %4:[ElemXor]
  V --> %1:[ElemXor] * %2:[ElemXor]
  U --> %3:[ElemXor] * %4:[ElemXor]
\end{verbatim}
}

\section{Fast Computation of More General Variant Unifiers}\label{sec:mgvu-fast}

The computation of both $\csuV{t = t'}{\cap,\sqsupset}{E\cup B}$ and $\csuV{t = t'}{\cap,\sqsupset}{E\cup B}/_{\simeq_{E\cup B}}$ is extremely expensive (see Section~\ref{sec:exp} below), both in execution time and memory usage, because we must use the same variant-based unification command in Maude for obtaining the variant unifiers and then for filtering them. In this section, we provide the main contribution of this paper on improving the computation of a set of most general variant unifiers. Let us motivate our main results with an example.

When we consider a variant unification problem between terms $X$ and $U * V$, we get an explosion of all the variants of $U * V$.

{\small
\begin{verbatim}
  Maude> variant unify in EXCLUSIVE-OR : X =? U * V  .

  Unifier #1
  X --> %1:[ElemXor] * %2:[ElemXor]
  V --> %1:[ElemXor]
  U --> %2:[ElemXor]

  Unifier #2
  X --> mt
  V --> #1:[ElemXor]
  U --> #1:[ElemXor]

  Unifier #3
  X --> #2:[ElemXor] * #3:[ElemXor]
  V --> #1:[ElemXor] * #2:[ElemXor]
  U --> #1:[ElemXor] * #3:[ElemXor]

  Unifier #4
  X --> #1:[ElemXor]
  V --> #1:[ElemXor] * #2:[ElemXor]
  U --> #2:[ElemXor]

  Unifier #5
  X --> #1:[ElemXor]
  V --> #2:[ElemXor]
  U --> #1:[ElemXor] * #2:[ElemXor]

  Unifier #6
  X --> #1:[ElemXor]
  V --> mt
  U --> #1:[ElemXor]

  Unifier #7
  X --> #1:[ElemXor]
  V --> #1:[ElemXor]
  U --> mt
\end{verbatim}
}

\noindent but it is clear that the simplest, most general unifier is $\{X \mapsto U * V\}$

{\small
\begin{verbatim}
  Maude> (post quotient variant unify in EXCLUSIVE-OR : X =? U * V  .)

  Unifier #1
  X --> %1:[ElemXor] * %2:[ElemXor]
  V --> %1:[ElemXor]
  U --> %2:[ElemXor]
\end{verbatim}
}

The main idea here, common to any unification algorithm (see \cite{BS-handbook00}), is that when a variable is found, i.e., $X \? t$, there is no need to search for further unifiers, since any other unifier will be an instance of $X \mapsto t$. We have formalized this idea but extended it to the case of having any context $C[X] \? C[t]$. Indeed, we have formalized it for the very general case of having any context modulo $B$, i.e., $C_1[X] \? C_2[t]$ s.t. $C_1[\Box] =_{B} C_2[\Box]$. The following auxiliary result stating that it is possible that any narrowing step from $t$ does not interfere with $C_1$, $C_2$ and $X$ is essential.

\begin{lemma}\label{lem}
Given a decomposition $(\Symbols,B,E)$ of an equational theory, two $\Symbols$-terms $t_1$ and $t_2$ s.t.
$W_\cap = \var{t_1}\cap\var{t_2}$ and
$W_\cup = \var{t_1}\cup\var{t_2}$,
$(u_1,\theta_1) \in \sem{t_1}$,
$(u_2,\theta_2) \in \sem{t_2}$,
$\sigma \in \csuV{u_1 = u_2}{W_\cup}{B}$ s.t.
$\restrict{(\theta_1\sigma)}{W_\cap} 
\congr{B} 
\restrict{(\theta_2\sigma)}{W_\cap}$,
$(u'_1,\theta'_1) \in \sem{t_1}$ s.t. $(u'_1,\rho) \in \sem{u_1}$ and $\restrict{\theta'_1}{W_\cup} \congr{B} \restrict{\theta_1\rho}{W_\cup}$,
$\sigma' \in \linebreak \csuV{u'_1 = u_2}{W_\cup}{B}$ s.t.
$\restrict{(\theta'_1\sigma)}{W_\cap} 
\congr{B} 
\restrict{(\theta_2\sigma)}{W_\cap}$,
and 
$\domain{\sigma}\cap\domain{\rho}=\emptyset$, 
then
$\restrict{((\theta_1\cup\theta_2)\sigma)}{W_\cup}$ and $\restrict{((\theta'_1\cup\theta_2)\sigma')}{W_\cup}$ are both equational unifiers of $t_1$ and $t_2$
but 
$\restrict{((\theta_1\cup\theta_2)\sigma)}{W_\cup} \sqsupseteq_{E\cup B} \restrict{((\theta'_1\cup\theta_2)\sigma')}{W_\cup}$.
\end{lemma}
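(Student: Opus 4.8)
The plan is to show that the second unifier, obtained by first doing an extra narrowing segment on $t_1$ (producing $(u'_1,\theta'_1)$ from $(u_1,\theta_1)$ via $(u'_1,\rho)\in\sem{u_1}$) and then intersecting with the variant $(u_2,\theta_2)$ of $t_2$, is an instance of the first one. Concretely, I would set $\mu_1 = \restrict{((\theta_1\cup\theta_2)\sigma)}{W_\cup}$ and $\mu_2 = \restrict{((\theta'_1\cup\theta_2)\sigma')}{W_\cup}$, and exhibit a substitution $\eta$ with $\restrict{(\mu_1\compose\eta)}{W_\cup} \congr{E\cup B} \restrict{\mu_2}{W_\cup}$; this is exactly what $\mu_1 \sqsupseteq_{E\cup B} \mu_2$ demands by the definition of $\sqsupseteq_{E\cup B}$ in the preliminaries. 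The fact that both $\mu_1$ and $\mu_2$ are $E\cup B$-unifiers of $t_1$ and $t_2$ follows directly from Proposition~\ref{prop} (Variant-based Unification) applied to the Variant Intersection construction: the hypotheses are precisely the side conditions in the definition of $V_1\cap V_2$ with $V_i = \sem{t_i}$, so $(u_1\sigma,\,\theta_1\sigma\cup\theta_2\sigma\cup\sigma)\in\sem{t_1}\cap\sem{t_2}$ and likewise for the primed data (recall $\sem{u_1}\subseteq\sem{t_1}$ up to the variant-composition, so $(u'_1,\theta'_1)$ is genuinely a variant of $t_1$, as assumed).

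The heart of the argument is building $\eta$. Here I would take $\eta$ to be essentially $\rho$, suitably combined with whatever binding discrepancy $\sigma$ versus $\sigma'$ introduces. The key relations are: on $W_\cup$, $\theta'_1 \congr{B} \theta_1\rho$ (hypothesis), and $(u'_1,\rho)\in\sem{u_1}$ gives $u_1\rho\norm{E,B}\congr{B} u'_1$ together with $\rho$ being in $E,B$-canonical form, so $u'_1$ is the narrowing-reduct of $u_1$ under $\rho$. Since $\sigma\in\csuV{u_1=u_2}{W_\cup}{B}$ and $\sigma'\in\csuV{u'_1=u_2}{W_\cup}{B}$, I need to compare $u_1\sigma$ with $u'_1\sigma' \congr{B} u_2\sigma'$. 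The disjointness hypothesis $\domain{\sigma}\cap\domain{\rho}=\emptyset$ is what lets me treat $\rho$ and $\sigma$ as acting on separate variable blocks, so that $\theta_1\sigma\rho$ and $\theta_1\rho\sigma'$ can be related; I would use it to argue that $\sigma\rho$ (or a renamed variant thereof, further composed with $\sigma'$) serves as $\eta$ restricted appropriately, turning $\restrict{(\theta_1\cup\theta_2)\sigma}{W_\cup}$ into something $E\cup B$-equal to $\restrict{(\theta'_1\cup\theta_2)\sigma'}{W_\cup}$. The component-wise check splits into the $\var{t_1}$ part (where $\theta_1$ and $\theta'_1\congr{B}\theta_1\rho$ live, and $\rho$ does the work) and the $\var{t_2}$ part (where $\theta_2$ is common, and I only need $\sigma$ and $\sigma'$ to agree modulo $E\cup B$ after composing with $\eta$, which follows from both being $B$-unifiers completed to the same narrowing-normal configuration).

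The main obstacle I expect is bookkeeping the interaction between the three substitutions $\sigma$, $\sigma'$, and $\rho$ on the overlapping variable set $W_\cap$: the conditions $\restrict{\theta_1\sigma}{W_\cap}\congr{B}\restrict{\theta_2\sigma}{W_\cap}$ and $\restrict{\theta'_1\sigma}{W_\cap}\congr{B}\restrict{\theta_2\sigma}{W_\cap}$ are stated with $\sigma$ (not $\sigma'$) in the primed equation — I would need to check whether this is a typo for $\sigma'$ or genuinely exploited, and argue that either way the extra narrowing step via $\rho$ on the $t_1$-side does not disturb the agreement on $W_\cap$ (this is the "does not interfere with $C_1,C_2$ and $X$" intuition from the surrounding text). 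A secondary subtlety is the freshness convention: because variant pairs are assumed to use fresh variables, $\range{\rho}$ and $\range{\sigma}$ are disjoint from $W_\cup$, which is exactly what keeps $\eta$ from clobbering the very variables we are comparing on; I would invoke this explicitly rather than re-prove it. Once the variable-tracking is pinned down, the $E\cup B$-equality and the instance relation fall out by the homomorphic-extension properties of substitutions and the congruence $\congr{E\cup B}$.
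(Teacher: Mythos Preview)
Your proposal is correct and follows essentially the same route as the paper: the paper's proof is the one-line chain
\[
\restrict{(\theta_1\cup\theta_2)\sigma}{W_\cup}\ \sqsupseteq_{E\cup B}\ \restrict{(\theta_1\cup\theta_2)(\sigma\cup\rho)}{W_\cup}\ \sqsupseteq_{E\cup B}\ \restrict{(\theta_1\cup\theta_2)\sigma\rho\sigma'}{W_\cup}\ =_{B}\ \restrict{(\theta'_1\cup\theta_2)\sigma'}{W_\cup},
\]
using precisely the disjointness $\domain{\sigma}\cap\domain{\rho}=\emptyset$ to form $\sigma\cup\rho$ and then instantiating by $\sigma'$, so your witness $\eta$ is exactly $\rho\sigma'$ as you anticipated. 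Your remarks on the $\sigma$ versus $\sigma'$ side condition and on the freshness bookkeeping are more careful than the paper, which leaves those points implicit.
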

\begin{proof}
The statement of the Lemma is depicted in Figure~\ref{fig:lem}.
The proof is done by realizing that
$\domain{\sigma}\cap\domain{\rho}=\emptyset$ 
implies that
$\restrict{(((\theta_1\cup\theta_2)(\sigma\cup\rho))}{W_\cup}$
is also a unifier of $t_1$ and $t_2$, and then
$$\restrict{(\theta_1\cup\theta_2)\sigma}{W_\cup} 
\sqsupseteq_{E\cup B} 
\restrict{(\theta_1\cup\theta_2)(\sigma\cup\rho)}{W_\cup} 
\sqsupseteq_{E\cup B} 
\restrict{(\theta_1\cup\theta_2)\sigma\rho\sigma'}{W_\cup}
=_{B}
\restrict{(\theta'_1\cup\theta_2)\sigma'}{W_\cup}$$
\end{proof}

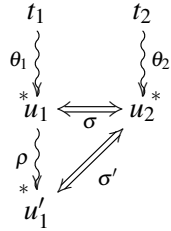
\begin{figure}[t]
$$
\xymatrix{
t_1\ar@{~>}_{\theta_1}_>{*}[d] & t_2\ar@{~>}^{\theta_2}^>{*}[d]\\
u_1\ar@{~>}_{\rho}_>{*}[d] & u_2\ar@{<=>}^{\sigma}[l]\ar@{<=>}^{\sigma'}[ld]\\
u'_1
}
$$
\caption{Sketch of the proof of Lemma~\ref{lem}}\label{fig:lem}
\end{figure}

We redefine the intersection of two sets of variants. Note that this definition does not prevent the generation of the variants of both terms in an unification problem; techniques for avoiding the generation of variants are outside the scope of this paper.

\begin{definition}[Fast Variant Intersection]
Given a decomposition $(\Symbols,B,E)$ of an equational theory, two $\Symbols$-terms $t_1$ and $t_2$ such that $W_\cap = \var{t_1}\cap\var{t_2}$ and $W_\cup = \var{t_1}\cup\var{t_2}$, and two sets $V_1$ and $V_2$ of variants of $t_1$ and $t_2$, respectively, we define 

\noindent
\begin{align*}
V_1 \doublecap V_2 = 
\{
&(u_1\sigma,\theta_1\sigma \cup \theta_2\sigma \cup \sigma) \mid (u_1,\theta_1) \in V_1 \wedge (u_2,\theta_2) \in V_2 \wedge\\ 
&\exists \sigma: \sigma \in \csuV{u_1 = u_2}{W_\cup}{B} 
\wedge 
\restrict{(\theta_1\sigma)}{W_\cap} 
\congr{B} 
\restrict{(\theta_2\sigma)}{W_\cap}
\wedge\\
&
(\nexists (u'_1,\theta'_1) \in V_1,\nexists\rho:
(u_1,\rho)\in\sem{u'_1}\wedge\\
&\ \nexists \sigma': \sigma' \in \csuV{u'_1 = u_2}{W_\cup}{B} 
\wedge 
\restrict{(\theta'_1\sigma')}{W_\cap} 
\congr{B} 
\restrict{(\theta_2\sigma')}{W_\cap}
\wedge 
\domain{\sigma'}\cap\domain{\rho}=\emptyset)
\wedge\\
&
(\nexists (u'_2,\theta'_2) \in V_2,\nexists\rho:
(u_2,\rho)\in\sem{u'_2}\wedge\\
&\ \nexists \sigma': \sigma' \in \csuV{u_1 = u'_2}{W_\cup}{B} 
\wedge 
\restrict{(\theta_1\sigma')}{W_\cap} 
\congr{B} 
\restrict{(\theta'_2\sigma')}{W_\cap}
\wedge 
\domain{\sigma'}\cap\domain{\rho}=\emptyset)
\}
\end{align*}
\end{definition}

Then, we define variant-based unification as the computation of the variants of the two terms in a unification problem and their \emph{minimal} intersection; its proof is immediate by Lemma~\ref{lem}.

\begin{proposition}[Fast Variant-based Unification]\label{prop:mgvu-fast} 
Let $(\Symbols,B,E)$ be a finite variant decomposition of an equational theory. Given two terms $t,t'$, on the one hand, the set $\csuV{t = t'}{\doublecap}{E\cup B}=\{\theta \mid (w,\theta)\in\sem{t}\doublecap\sem{t'}\}$ is a \emph{finite and complete} set of \emph{unifiers} for $t = t'$ and the quotient $\csuV{t = t'}{\doublecap}{E\cup B}/_{\simeq_{E\cup B}}$ is also a (generally smaller) \emph{finite and complete} set of unifiers for $t = t'$. On the other hand, the set $\csuV{t = t'}{\doublecap,\sqsupset}{E\cup B}= \{\theta \mid \theta \in \csuV{t = t'}{\doublecap}{E\cup B} \wedge \nexists \theta'\in \csuV{t = t'}{\doublecap}{E\cup B}\setminus\{\theta\} : \theta' \sqsupset_{E\cup B} \theta\}$ is a \emph{finite and complete} set of unifiers for $t = t'$. Furthermore, the quotient $\csuV{t = t'}{\doublecap,\sqsupset}{E\cup B}/_{\simeq_{E\cup B}}$ is a \emph{finite, minimal, and complete} set of unifiers for $t = t'$.
\end{proposition}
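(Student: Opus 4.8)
The plan is to reduce the whole statement to three facts already available: Corollary~\ref{cor:finitary-unification}, which gives that $\csuV{t = t'}{\cap}{E\cup B}$ is finite and complete; Lemma~\ref{lem}, which says that ``un-narrowing'' one of the two variants participating in an intersection step yields a more general unifier; and the routine post-filtering bookkeeping already performed for Proposition~\ref{prop:post}. The first and easiest observation is that $\sem{t}\doublecap\sem{t'}\subseteq\sem{t}\cap\sem{t'}$, since the set-builder defining $\doublecap$ merely adds conjuncts to the one defining $\cap$; hence $\csuV{t = t'}{\doublecap}{E\cup B}\subseteq\csuV{t = t'}{\cap}{E\cup B}$, so this set is finite by Corollary~\ref{cor:finitary-unification} and every one of its elements is an $E\cup B$-unifier of $t$ and $t'$. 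Finiteness and soundness of $\csuV{t = t'}{\doublecap,\sqsupset}{E\cup B}$ and of the two quotients then follow immediately, each being a subset, or a choice of representatives of a subset, of this finite set of unifiers.

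The heart of the proof is the completeness of $\csuV{t = t'}{\doublecap}{E\cup B}$. I would fix an arbitrary $E\cup B$-unifier $\rho$ of $t$ and $t'$ and, by completeness of $\csuV{t = t'}{\cap}{E\cup B}$, obtain some $\theta\in\csuV{t = t'}{\cap}{E\cup B}$ arising from variants $(u_1,\theta_1)\in\sem{t}$, $(u_2,\theta_2)\in\sem{t'}$ and a $B$-unifier $\sigma\in\csuV{u_1=u_2}{W_\cup}{B}$, with $\restrict{\theta}{W_\cup}\sqsupseteq_{E\cup B}\restrict{\rho}{W_\cup}$. If $\theta$ already belongs to $\csuV{t = t'}{\doublecap}{E\cup B}$ we are done; otherwise one of the two extra clauses of the Fast Variant Intersection fails for $(u_1,\theta_1),(u_2,\theta_2),\sigma$ --- say the clause concerning $\sem{t}$ --- exhibiting a strictly less instantiated variant $(u'_1,\theta'_1)\in\sem{t}$, a narrowing substitution $\rho'$ with $(u_1,\rho')\in\sem{u'_1}$ and $\domain{\sigma'}\cap\domain{\rho'}=\emptyset$, and a $B$-unifier $\sigma'\in\csuV{u'_1=u_2}{W_\cup}{B}$ meeting the compatibility condition. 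These are precisely the hypotheses of Lemma~\ref{lem}, depicted in Figure~\ref{fig:lem}: the built-in disjointness condition $\domain{\sigma'}\cap\domain{\rho'}=\emptyset$ is exactly what the lemma needs, so no further renaming is required, and the unifier $\theta'$ built from $(u'_1,\theta'_1),(u_2,\theta_2),\sigma'$ lies in $\csuV{t = t'}{\cap}{E\cup B}$ and satisfies $\restrict{\theta'}{W_\cup}\sqsupseteq_{E\cup B}\restrict{\theta}{W_\cup}\sqsupseteq_{E\cup B}\restrict{\rho}{W_\cup}$. The clause concerning $\sem{t'}$ is treated symmetrically.

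I would then close the argument by a well-founded descent. Each such step replaces the offending variant of $t$, or of $t'$, by a proper ancestor of it in the folding variant narrowing tree, so it strictly decreases the corresponding component of the pair (narrowing depth of the $\sem{t}$-variant, narrowing depth of the $\sem{t'}$-variant) while leaving the other component untouched; hence the pair strictly decreases in the product order on $\mathbb{N}\times\mathbb{N}$, the descent terminates, and the terminal $\theta^\star$ falsifies neither extra clause, i.e.\ $\theta^\star\in\csuV{t = t'}{\doublecap}{E\cup B}$ with $\restrict{\theta^\star}{W_\cup}\sqsupseteq_{E\cup B}\restrict{\rho}{W_\cup}$. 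From completeness and finiteness of $\csuV{t = t'}{\doublecap}{E\cup B}$ the remaining claims follow exactly as in Proposition~\ref{prop:post}: the quotient $\csuV{t = t'}{\doublecap}{E\cup B}/_{\simeq_{E\cup B}}$ stays complete because every discarded unifier is $\simeq_{E\cup B}$-equivalent, hence $\sqsupseteq_{E\cup B}$, to its surviving representative; $\csuV{t = t'}{\doublecap,\sqsupset}{E\cup B}$ stays complete because $\sqsupset_{E\cup B}$ is a strict partial order on the finite set $\csuV{t = t'}{\doublecap}{E\cup B}$, so every unifier lies below a $\sqsupset_{E\cup B}$-maximal element, which by definition survives the filtering; and $\csuV{t = t'}{\doublecap,\sqsupset}{E\cup B}/_{\simeq_{E\cup B}}$ is moreover minimal, since two distinct $\simeq_{E\cup B}$-classes in it cannot be $\sqsupseteq_{E\cup B}$-comparable without contradicting that $\sqsupset_{E\cup B}$-maximality.

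I expect the main obstacle to be making that descent fully rigorous, which the paper's ``immediate by Lemma~\ref{lem}'' glosses over: one must first check that the two extra clauses of the Fast Variant Intersection indeed single out exactly the non-maximal (with respect to un-narrowing) pairs of variants; then choose a genuinely well-founded measure on such pairs --- the narrowing depth in the folding variant narrowing tree is the natural candidate, but it must be made robust to the renamings implicit in handling freshly generated variables; and finally verify that the alternation between the $\sem{t}$-side and the $\sem{t'}$-side cannot loop, which is exactly where the finiteness guaranteed by the finite variant property is needed.
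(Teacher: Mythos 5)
Your proof is correct and follows the same route the paper intends: the paper's entire argument for this proposition is the single phrase ``immediate by Lemma~\ref{lem}'', and your expansion --- finiteness and soundness from $\sem{t}\doublecap\sem{t'}\subseteq\sem{t}\cap\sem{t'}$ plus Corollary~\ref{cor:finitary-unification}, completeness by invoking Lemma~\ref{lem} whenever a candidate pair is excluded by one of the extra clauses and closing with a well-founded descent on narrowing depth, then the same post-filtering and quotient bookkeeping as in Proposition~\ref{prop:post} --- is exactly the detail that phrase suppresses. The obstacles you flag at the end (making the descent and the treatment of renamings rigorous, and checking that alternation between the two sides terminates) are genuine, but they are gaps in the paper's exposition rather than in your argument.
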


We have implemented these four fast unification methods in an extended version of Full Maude version 27g \cite{full-maude}, which is available  at \url{http://safe-tools.dsic.upv.es/mgvu}:

\begin{itemize}
\item The new command implementing the algorithm $\csuV{t = t'}{\doublecap}{E\cup B}$ is
\noindent
{\small
\begin{verbatim}
(fast variant unify [ n ] in ModId : T1 =? T1' /\ ... /\  Tk =? Tk' .)
\end{verbatim}
}

\item The new command implementing the algorithm $\csuV{t = t'}{\doublecap}{E\cup B}/_{\simeq_{E\cup B}}$ is
\noindent
{\small
\begin{verbatim}
(fast quotient variant unify [ n ] in ModId : T1 =? T1' /\ ... /\  Tk =? Tk' .)
\end{verbatim}
}

\item The new command implementing the algorithm $\csuV{t = t'}{\doublecap,\sqsupset}{E\cup B}$ is
\noindent
{\small
\begin{verbatim}
(fast post variant unify [n] in ModId : T1 =? T1' /\ ... /\  Tk =? Tk' .)
\end{verbatim}
}

\item And the new command implementing the algorithm $\csuV{t = t'}{\doublecap,\sqsupset}{E\cup B}/_{\simeq_{E\cup B}}$ is
\noindent
{\small
\begin{verbatim}
(fast post quotient variant unify [n] in ModId : T1 =? T1'/\ ... /\ Tk =? Tk' .)
\end{verbatim}
}
\end{itemize}

For the unification problem $X * Y$ and $U * V$, the \texttt{fast} command delivers $8$ unifiers instead of the $57$ unifiers for standard variant unification. However, $7$ of those $8$ unifiers are equivalent, thus the \texttt{fast\;quotient} command delivers only $2$ unifiers. Likewise, the \texttt{fast\;post} command returns the same $7$ unifiers as the \texttt{post} command, and the \texttt{fast\;post\;quotient} command gets the same (most general) unifier as the \texttt{post\;quotient} command above. Note that the \texttt{fast} unification command and the \texttt{fast\;quotient} unification command compute these unifiers in a fraction of time compared to the \texttt{post} unification command and the \texttt{post\;quotient} unification command (see unification problem $P_6$ in Section~\ref{sec:exp}). 

When we consider the previous variant unification problem between terms $X$ and $U * V$, now we get just one unifier as desired, and again in a fraction of time compared to $\csuV{t = t'}{\cap,\sqsupset}{E\cup B}$ (see unification problem $P_1$ in Section~\ref{sec:exp}). 

{\small
\begin{verbatim}
  Maude> (fast variant unify in EXCLUSIVE-OR : X =? U * V  .)

  Unifier #1
  X --> %1:[ElemXor] * %2:[ElemXor]
  V --> %1:[ElemXor]
  U --> %2:[ElemXor]
\end{verbatim}
}

\noindent
Note that, in this case, clearly the \texttt{fast\;post} and \texttt{fast\;post\;quotient} unification commands do not improve over the \texttt{fast} unification command.

\section{Experimental Evaluation}\label{sec:exp}

To evaluate the performance of both the post-filtering and the fast unification techniques, we have conducted a series of benchmarks available at \url{http://safe-tools.dsic.upv.es/mgvu}.

All the experiments were conducted on a PC with a 3.3GHz Intel Xeon E5-1660 and 64GB RAM. First, we created a battery of 20 different unification problems for both the {\it exclusive-or} and the {\it abelian group} theories. For each problem and theory, we computed: (i) the unifiers by using the standard {\tt variant\;unify} command provided by the C++ core system of Maude; (ii) the unifiers by using the {\tt post\;quotient\;variant\;unify} command implemented at the metalevel of Maude; (iii) the unifiers by using the {\tt fast\;quotient\;variant\;unify} command implemented at the metalevel of Maude; and (iv) the unifiers by using the {\tt fast\;post\;quotient\;variant\;unify} command, also implemented at the metalevel of Maude. We measured both the number of computed unifiers and the time required for their computation.

Since it is unfair to compare the performance between compiled code and interpreted code, i.e., the C++ core system of Maude and a Maude program using Maude's metalevel, we have reimplemented the {\tt variant unify} command at the metalevel and applied the post-filtering and the fast variant intersection to the output returned by this reimplementation.

Table~\ref{tab:xor} (resp. Table~\ref{tab:ag}) shows the results obtained for the {\it exclusive-or} (resp. {\it abelian group)} theory. \emph{T/O} indicates that a generous $24$ hours \emph{timeout} was reached without any response. The first column describes the unification problem, while the following $\#_{\mathit{maude}}$, $\#_{\mathit{post}}$, $\#_{\mathit{fast}}$, and $\#_{\mathit{fast,post}}$ columns show the number of computed unifiers for Maude's unification command, the post-filtering technique producing the quotient w.r.t. $\sqsupseteq_{E\cup B}$, the fast unification technique, and the combination of fast and the post-filtering, respectively. The $\mathcal{T}_{\mathit{maude}}$ column measures the time (in milliseconds) required to execute the {\tt variant unify} command for the given input problem, the $\cT_{\mathit{post}}$ column measures the time required by the reimplementation of the {\tt variant unify} command together with the post-filtering technique, 
%producing the quotient w.r.t. $\sqsupseteq_{E\cup B}$, 
the $\cT_{\mathit{fast}}$ column measures the time required by the reimplementation of the {\tt variant unify} command together with the fast unification technique, and the $\cT_{\mathit{fast,post}}$ column measures the time required of all three combined, the reimplementation, the fast technique, and the post-filtering.

\begin{table}[t]
	\centering
	\scriptsize
	{\setlength{\tabcolsep}{0.5em}
	\begin{tabular*}{\textwidth}{|c|c@{\extracolsep{\fill}}|r|r|r|r|r|r|r|r|}
		\cline{1-10} 
		\multicolumn{2}{|c|}{\it Unification problem}
		&\multicolumn{1}{c|}{$\#_{\mathit{maude}}$}
		&\multicolumn{1}{c|}{$\mathcal{T}_{\mathit{maude}}$} 
		&\multicolumn{1}{c|}{$\#_{\mathit{post}}$}
		&\multicolumn{1}{c|}{$\mathcal{T}_{\mathit{post}}$} 
		&\multicolumn{1}{c|}{$\#_{\mathit{fast}}$}
		&\multicolumn{1}{c|}{$\mathcal{T}_{\mathit{fast}}$} 
		&\multicolumn{1}{c|}{$\#_{\mathit{fast,post}}$}
		&\multicolumn{1}{c|}{$\mathcal{T}_{\mathit{fast,post}}$}\\
		\cline{1-10}
		$P_{1}$ & $ V_1 \? V_2 * V_3$                                   		        &7         &0         &1	&13   		&1		&4        	&1		&4\\
		$P_{2}$ & $ V_1 \? V_2 * V_3 * V_4$                               		        &57        &49        &1    &6545      	&1      &1080      	&1      &1168\\
		$P_{3}$ & $ V_1 \? f_1(V_2 * V_3 * f_1(V_4))$                    		        &21        &3         &1    &199	   	&1   	&47			&1   	&47\\
		$P_{4}$ & $ V_1 \? f_2(V_2 * V_3, f_1(V_2 * V_4))$                		        &61        &98        &1    &18895     	&1      &1463     	&1      &1470\\
		$P_{5}$ & $ V_1 \? f_3(V_2 * V_3,f_1(V_3 * V_4),f_2(V_2,f_1(V_4)))$             &61        &193       &1    &20949     	&1      &1958     	&1      &1966\\
		\cline{1-10}
		$P_{6}$ & $ V_1 * V_2 \? V_3 * V_4$                             		        &57        &10        &1    &12240890  	&2      &72        	&1      &10005912\\
		$P_{7}$ & $ V_1 * V_2 \? f_1(V_3 * V_4)$                                		&28        &8         &1    &697       	&4      &17       	&1      &41\\
		$P_{8}$ & $ V_1 * V_2 \? f_1(V_3 * V_3 * f_1(V_4))$                    			&4         &0         &1    &6         	&4      &3         	&1   	&5\\
		$P_{9}$ & $ V_1 * V_2 \? f_2(V_3 * V_4,f_1(V_3 * V_5))$                			&244       &741       &1    &30490862  	&4      &2193    	&1      &14836\\
		$P_{10}$ & $ V_1 * V_2 \? f_3(V_3 * V_4,f_1(V_4 * V_5),f_2(V_3,f_1(V_5)))$	    &244       &1277      &1    &30423527  	&4      &2868    	&1      &14802\\
		\cline{1-10}
		$P_{11}$ & $ f_1(V_1) \? f_1(V_2 * V_3)$                                	    &7         &0         &1    &13        	&1      &4        	&1      &4\\
		$P_{12}$ & $ f_1(V_1) * f_1(V_2) \? f_1(V_3) * f_1(V_3 * V_4)$                  &13        &3         &2    &118       	&2      &8       	&2      &9\\
		$P_{13}$ & $ f_1(V_1 * V_2) \? f_1(V_3 * V_4 * V_5)$                         	&973       &857       &-    &T/O      	&8      &15539      &-      &T/O\\
		$P_{14}$ & $ f_2(V_1 * V_2,V_2 * V_3) \? f_2(V_4,V_5)$                		    &61        &97        &1    &32836     	&1      &1471     	&1      &1473\\
		$P_{15}$ & $ f_3(V_1 * V_2,V_3 * V_4,V_5 * V_6) \? f_3(V_7,V_8,V_9)$            &343       &173       &1    &165260    	&1      &20608     	&1      &20634\\
		\cline{1-10}
		$P_{16}$ & $ V_1 \? a * b * V_2$                                                &8         &0         &1    &11        	&1      &2         	&1   	&2\\
		$P_{17}$ & $ V_1 * V_2 \? a * b * V_3$                                          &69        &9         &1    &2259      	&5      &74        	&1   	&183\\
		$P_{18}$ & $ V_1 * a \? V_2 * b$                                                &8         &0         &1    &11        	&4      &2         	&1   	&4\\
		$P_{19}$ & $ f_1(a) * f_1(V_1) \? f_1(V_2 * b) * f_1(V_3 * c)$                  &16        &3         &3    &104       	&10     &13        	&3      &47\\
		$P_{20}$ & $ f_2(a,V_1) \? f_2(V_2 * V_3,f_1(a * b))$                           &4         &0         &1    &9         	&4      &4   		&1   	&5\\
		\cline{1-10}
	\end{tabular*}
	}
\caption{Experimental evaluation (exclusive-or)}\label{tab:xor}
\end{table}

\begin{table}[t]
	\centering
	\scriptsize
	{\setlength{\tabcolsep}{0.5em}
	\begin{tabular*}{\textwidth}{|c|c@{\extracolsep{\fill}}|r|r|r|r|r|r|r|r|}
		\cline{1-10} 
		\multicolumn{2}{|c|}{\it Unification problem}
		&\multicolumn{1}{c|}{$\#_{\mathit{maude}}$}
		&\multicolumn{1}{c|}{$\mathcal{T}_{\mathit{maude}}$} 
		&\multicolumn{1}{c|}{$\#_{\mathit{post}}$}
		&\multicolumn{1}{c|}{$\mathcal{T}_{\mathit{post}}$} 
		&\multicolumn{1}{c|}{$\#_{\mathit{fast}}$}
		&\multicolumn{1}{c|}{$\mathcal{T}_{\mathit{fast}}$} 
		&\multicolumn{1}{c|}{$\#_{\mathit{fast,post}}$}
		&\multicolumn{1}{c|}{$\mathcal{T}_{\mathit{fast,post}}$}\\
		\cline{1-10}
		$P_{21}$ & $ V_1 \? V_2 + V_3$                                   		        &47     &68     &1      &6185     &1      &778    &1      &806\\
		$P_{22}$ & $ V_1 \? f_1(V_2 + V_3)$                               		        &47     &68     &1      &6117     &1      &796    &1      &808\\
		$P_{23}$ & $ V_1 \? f_1(V_2 + V_2 + f_1(V_3))$                    		        &8      &13     &1      &125      &1      &43     &1      &43\\
		$P_{24}$ & $ V_1 \? f_2(V_2 + V_3 + f_1(V_3),V_4)$ 		           		        &103    &371    &1      &55662    &1      &10696  &1      &10696\\
		$P_{25}$ & $ V_1 \? f_3(V_2,f_1(V_3 + V_4),f_2(V_3,V_5))$			            &6      &2      &1      &30       &1      &7      &1      &7\\
		\cline{1-10}
		$P_{26}$ & $ V_1 + V_2 \? V_3 + V_4$                             		        &3611   &21663  &-      &T/O      &167    &439304 &-      &T/O\\
		$P_{27}$ & $ V_1 + V_2 \? f_1(V_3 + V_4)$                                		&376    &13864  &1      &22207559 &8      &3830   &1      &27870\\
		$P_{28}$ & $ V_1 + V_2 \? f_1(V_3 + V_3 + f_1(V_4))$                  			&64     &1239   &1      &82170    &8      &904    &1      &3382\\
		$P_{29}$ & $ V_1 + V_2 \? f_2(V_3 + V_4,f_1(V_5))$		               			&376    &13373  &1      &19468887 &8      &4059   &1      &30537\\
		$P_{30}$ & $ V_1 + V_2 \? f_3(V_3 + V_3,V_4,V_5)$							    &32     &466    &1      &4743     &8      &836    &1      &1194\\
		\cline{1-10}
		$P_{31}$ & $ f_1(V_1) \? f_1(V_2 + V_3)$                                	    &47     &71     &1      &9985     &1      &842    &1      &849\\
		$P_{32}$ & $ f_1(V_1) + f_1(V_2) \? f_1(V_3) + f_1(V_3 + V_4)$                  &93     &150    &1      &699872   &1      &1417   &1      &1449\\
		$P_{33}$ & $ f_1(V_1 + V_2) \? f_1(V_3 + - V_4)$                            	&3702   &25277  &-      &T/O      &109    &283851 &1      &48028877\\
		$P_{34}$ & $ f_2(V_1 + V_2,V_2 + V_3) \? f_2(V_4,- V_5)$              		    &188    &356    &1      &154443   &1      &2384   &1      &2409\\
		$P_{35}$ & $ f_3(V_1 + V_2,f_1(V_3), - V_4) \? f_3(V_5,- V_6,V_6)$              &47     &1812   &1      &35992    &1      &25889  &1      &29674\\
		\cline{1-10}
		$P_{36}$ & $ V_1 \? a + - b + V_2$                                              &14     &5      &1      &117      &1      &20     &1      &29\\
		$P_{37}$ & $ V_1 + V_2 \? a + b + V_3$                                          &510    &1411   &1      &1366009  &107    &5557   &1      &288552\\
		$P_{38}$ & $ V_1 + a \? V_2 + b$                                                &14     &9      &1      &107      &8      &8      &1      &63\\
		$P_{39}$ & $ f_1(a) + f_1(V_1) \? f_1(V_2 + - b) + f_1(V_3 + c)$                &12     &17     &2      &277      &2      &150    &2      &142\\
		$P_{40}$ & $ f_2(a,V_1) \? f_2(V_2 + V_3,f_1(a + b))$                           &8      &79     &2      &831      &8      &764    &1      &920\\
		\cline{1-10}
	\end{tabular*}
	}
\caption{Experimental evaluation (abelian group)}\label{tab:ag}
\end{table}

Table~\ref{tab:xor} shows that, for the {\it exclusive-or} theory, the {\tt fast\;post\;quotient} unification command almost replicates the results obtained by using the {\tt post\;quotient} unification command, but in a fraction of time, as in unification problems $P_{9}$ and $P_{10}$. For the number of unifiers, Maude reported $973$ unifiers for the unification problem $P_{13}$, and the fast unification technique delivers just $8$ unifiers, whereas applying the post-filtering technique to either standard or fast unification is hopeless. For the execution time, the unification problem $P_6$ reports only $10$ milliseconds for $\mathcal{T}_{\mathit{maude}}$, $72$ milliseconds for $\mathcal{T}_{\mathit{fast}}$, $12240890$ milliseconds ($3,4$ hours) for $\mathcal{T}_{\mathit{post}}$, and $10005912$ milliseconds ($2,7$ hours) for $\mathcal{T}_{\mathit{fast,post}}$, demonstrating that the post-filtering technique is expensive in any case. 

Table~\ref{tab:ag} shows the experimental results for the {\it abelian group} theory. Since this theory is far more complex than the {\it exclusive-or} theory, the execution time and the number of unifiers are bigger than those in Table~\ref{tab:xor}. For the unification problem $P_{27}$, Maude reported $376$ unifiers and the fast unification technique reported just $8$ unifiers. The post-filtering technique delivers only one most general unifier, but it takes $22207559$ milliseconds ($6,2$ hours) to compute it from the $376$ unifiers and only $27870$ milliseconds (less than $28$ seconds) to compute it from the $8$ unifiers, demonstrating that applying the fast unification technique is advantageous in any case.

\section{Conclusion and Future Work}\label{sec:conc}

The variant-based equational unification algorithm implemented in the most recent version of Maude, version 2.7.1, may compute many more unifiers than the necessary and, in this paper, we have explored how to strengthen such an algorithm to produce a smaller set of variant unifiers. Our experiments suggest that this new adaptation of the variant-based unification is more efficient both in execution time and in the number of computed variant unifiers than the original algorithm.

As far as we know, this is the first work to reduce the number of variant unifiers. The closest work are methods to combine standard unification algorithms with variant-based unification, such as \cite{EKMN+15,EEMR19}. This is just a step forward on developing new techniques for improving variant-based unification and we plan to reduce even more the number of variant unifiers.

\bibliographystyle{eptcs}
\bibliography{biblio}

\end{document}